\newcommand{\pt}[2]{\frac{\partial{#1}}{\partial{#2}}}
\newtheorem{theorem}{Theorem}
\title{Intimal Growth in Cylindrical Arteries: Impact of Anisotropic Growth on Glagov Remodeling}
\date{}
\author{Navid Mohammad Mirzaei, \\
	Dept. of Mathematical Sciences, \\
	University of Delaware, \\
	Newark,\\
	DE 19716 \and
	Pak-Wing Fok, \\
	Dept. of Mathematical Sciences, \\
	University of Delaware,\\
	Newark,\\
	DE 19716}
\begin{document}

\title{Intimal Growth in Cylindrical Arteries: Impact of Anisotropic Growth on Glagov Remodeling}
\author{ {\sc Navid Mohammad Mirzaei}\\[2pt]
Dept. of Mathematical Sciences, University of Delaware, \\
 Newark, DE, 19716, USA.\\[6pt]
 {\sc Pak-Wing Fok}\\[2pt]
 Dept. of Mathematical Sciences, University of Delaware, \\
 Newark, DE, 19716, USA.\\[6pt]
{\rm }\vspace*{6pt}}
\pagestyle{headings}
\markboth{N. M.Mirzaei}{\rm IMPACT OF ANISOTROPIC GROWTH ON GLAGOV REMODELING}
\maketitle


\begin{abstract}
{In this paper we investigate the effect of anisotropic growth on Glagov remodeling in different cases: pure radial, pure circumferential, pure axial and general anisotropic growth. We use the theory of morphoelasticity on an axisymmetric arterial domain. For each case we explore their specific effect on the Glagov curves and stress and  provide the changes in collagen fibers angles in the intima, media and adventitia. In addition, we compare the strain energy produced by growth in radial, circumferential and axial direction and deduce that anisotropic growth generally leads to lower strain energy than isotropic growth. Therefore, we explore an anisotropic growth regime and use the resulting model to simulate vessel remodeling. We compare the Glagov curves, stress, energies and fiber angles in the anisotropic case with those of the isotropic case. Our results show that the anisotropic growth produces a remodeling curve more consistent with Glagov's experimental data with gentler outward remodeling and more realistic stress profiles. }
{Glagov remodeling, morphoelasticity, anisotropic growth, arterial biomechanics, atherosclerosis, intimal thickening.}
\end{abstract}

\section{Introduction}


Despite the recent advances in preventive methods, cardiovascular disease is still the global leading cause of death. According to the American Heart Association, it accounts for more than 17.9 million deaths per year in 2015 and is expected to grow to more than 23.6 million by 2030.

{\bf Atherosclerosis} is a cardiovascular disease which causes the narrowing of the blood vessels  therefore reducing the blood flow. It can lead to life-threatening problems including heart attack and stroke.

According to \cite{Virmani2008}, the evolution of vascular disease involves a combination of endothelial dysfunction, extensive lipid deposition in the intima, exacerbated immune responses, proliferation of vascular smooth muscle cells and remodeling, resulting in the formation of an atherosclerotic plaque. High risk atherosclerotic plaques (vulnerable plaques) have a large lipid-rich necrotic core with an overlying thin fibrous cap infiltrated by inflammatory cells and diffuse calcification. These plaques are more susceptible to rupture. About $4-13\%$ of fatal cases of acute myocardial infarction are caused by rupture (see \cite{London1965}).

Low shear stress plays an essential role in triggering atherosclerosis (see \cite{Libby2002,Mundi2017,Channon2006}). Healthy endothelial cells produce a certain amount of Nitric Oxide (NO) which is a vasodilator. Decrease in laminar shear stress reduces the production of this chemical which leads to endothelial dysfunction. This increases the permeability of the endothelium to low density lipoproteins (LDL) as well as the production of vascular cell adhesion molecule-1 (VCAM-1). These molecules start an inflammatory process by binding with the intracellular adhesion molecule-1 (ICAM-1) on the surface of leukocytes present in the blood stream. Attached to the endothelium, these leukocytes penetrate the vessel wall in response to the chemoattractant (MCP-1) present in the intima. Once inside, macrophage colony stimulating factor (M-CSF) causes them to turn into macrophages. LDLs absorbed by the intima go through oxidization and turn into oxidized LDLs. Macrophages cause inflammation and consume these oxidized LDLs and release more (MCP-1), turning into foam cells. Smooth muscle cells (SMCs) can also migrate into the plaque from the underlying media. The death of SMCs, foam cells and macrophages all contribute to a necrotic core, one of the defining characteristics of a vulnerable plaque (see \cite{Libby2002,Virmani2008}). 

Intima thickness is one of the most important factors in assessing cardiovascular risk and one of the most common methods in measuring the progression of atherosclerosis. Intimal thickening however is different from atherosclerosis since its associated lesions are less inflamed. In other words inflammatory agents like macrophages are almost non existent inside the intima. However, intimal thickening is considered to be an important precursor to atherosclerosis. A thickened intima provides a great opportunity for the onset of atherosclerotic lesions (see \cite{Kim1985, Schwartz1995}).    

Investigating the mechanical properties of the blood vessels is an essential step towards understanding cardiovascular diseases. JD Humphrey and LA Taber investigate the stress-modulated growth and residual stress of the arteries using the concept of opening angles (see \cite{Taber2001}). In their research they speculate that the vascular heterogeneity must be a result of collagen distribution. Four years later, Gerhard Holzapfel {\it et al.} determine the mechanical properties of coronary artery layers with nonatherosclerotic intimal thickening (see \cite{Holzapfel2005}). In their study, they experiment on thirteen hearts, from 3 women and 10 men which were harvested within 24 hours of death. Then they create coronary artery cross sections and cut them along the axial direction to obtain flat rectangular sheets. Thereafter, by exposing the sheets to tensile stresses, they were able to come up with layer-specific mechanical parameters later used in their strain energy function. This function is able to capture the stiffening effect of collagen fibers that exist in each layer.

There are many studies that try to understand the cell and chemical dynamics of intimal thickening and atherosclerosis using reaction-diffusion type models. One can find a comprehensive example of such in Hao and Friedman's study (see \cite{Hao2014}). They have most of the key players including a velocity field which is the result of movement of macrophages, T-cells and smooth muscle cells into the intima. This procedure promotes intimal thickening. Their model however, does not consider the mechanical properties of the intima and neglects the other two layers of the vessel wall. For this reason it qualifies as a reaction-diffusion type model. Mary R. Myerscough {\it et al.} use differential equations to purely explore the dynamics of early atherosclerosis (see \cite{Mary2015}). Their model considers the concentration of LDLs, chemoattractants, embryonic stem (ES) cytokines, macrophages and foam cells. All of their simulations are done in one dimension and their result provides qualitative and quantitative insight into the effect of LDL penetration in the inflammatory response. In 2017, Mary R. Myerscough {\it et al.} further investigate the effect of High density Lipoproteins (HDL) in plaque regression (see \cite{Mary2017}). El Khatib {\it et al.} suggest that inflammation propagates in the intima as a reaction diffusion wave (see \cite{Khatib2007}). They conclude that in the case of intermediate LDL concentrations there are two stable equilibria: one corresponding to the disease free state and the other one to the inflammatory state while the traveling wave connects these two states.  

In 1987, Seymour Glagov discovered an important behavior of the arteries experimenting on section of the left main coronary artery in 136 hearts obtained at autopsy (see \cite{Glagov1987}). He found that arteries remodel as the plaque grows to compensate for the narrowing of the lumen to maintain the histological blood flow. However, this compensation will continue until the lesion occupies about 40\% of the internal elastic lamina area and then the narrowing of the lumen starts. Understanding this phenomenon is of great importance. Since the coronary angiography can only visualize the lumen the extent of the plaque burden in the arterial wall might be underestimated during the compensation phase. Therefore, understanding this attribute of the blood vessels is crucial for devising new methods for determining the severity of arterial diseases such as atherosclerosis. This phenomenon has been the subject of biological and mathematical studies ever since (see \cite{Korshunov2004, Mohiaddin2004, Korshunov2007} and \cite{Fok2016}). PW Fok explores the growth in a 2D annulus subject to a uniform isotropic growth tensor (see \cite{Fok2016}). Although these assumptions are not realistic but the results seem to follow the general attribute of the Glagov remodeling. In other words, it captures the compensation phase followed by an inward remodeling of the endothilial wall at about 30\% stenosis. 

In this paper we focus on a three dimensional axisymmetric vessel wall with 3 layers. We use a finite element method based on morphoelasticity. We utilize the layer specific strain energy function proposed in \cite{Holzapfel2005} to account for the stiffening effect of the collagen fibers. All of our numerical simulations are carried out in a FEniCS framework (see \cite{Fenics2016}). Although there are various studies involving the artery growth in two dimensions, we believe that growth in 3 dimensions produces interesting results that should not be neglected. We provide results that show the isotropic growth assumption is not energetically favorable and produces results that are not realistic such as quick outward remodeling with respect to stenosis. On the other hand, anisotropic treatment of the problem is more reasonable. It results in a gentle outward remodeling with respect to stenosis that is more in line with what Glagov observes (see \cite{Glagov1987}).

This paper is laid out in the following way. In section 2 we discuss the hyperelastic modeling of our problem. In  section 3 we provide our results and finally we summarize our conclusions in section 4.

\section{The variational formulation}
Morphoelasticity is the underlying assumption for our simulations. It interprets the deformation in hyperelastic materials as a pure growth accompanied by an elastic response (see \cite{Goriely2007, Rodriguez1994}). In other words, we can decompose the deformation gradient into a growth tensor ${\bf G}$ and an elastic tensor ${\bf F}_e$:
\begin{equation}
{\bf F} = {\bf F}_e {\bf G}.	\label{eq2.1}
\end{equation}
As mentioned before we consider the artery as a three layered growing domain. This growth is a volumetric growth that occurs only inside the intima and can also be accompanied by surface loads. Corresponding to each of the tensors in (\ref{eq2.1}) we have
\begin{eqnarray}
J &=& \mathrm{det}({\bf F}), \label{eq2.2}\\
J_e &=& \mathrm{det}({\bf F}_e), \label{eq2.3}\\
J_g &=& \mathrm{det}({\bf G}). \label{eq2.4}
\end{eqnarray}

Deformation and growth of the artery lead to a change in the strain energy $W$. This strain energy is the sum of energy stored due to the volumetric changes (${\bf \Psi}_{\rm vol}$) and the anisotropic responses (${\bf \Psi}_{\rm aniso}$) of the layer (see \cite{Holzapfel2002}):
\begin{equation*}
W_i = {\bf \Psi}_{\rm vol}^i+{\bf \Psi}_{\rm aniso}^i
\end{equation*}
\begin{equation}
{\bf \Psi}_{\rm vol}^i= \frac{\mu_i}{2} (I_1-3)+\frac{\nu}{1-2\nu} \mu_i (J_e - 1)^2-\mu_i \ln J_e \label{eq2.5}
\end{equation}
\begin{equation}
{\bf \Psi}_{\rm aniso}^i=  \frac{\eta_i}{\beta_i} \left\{ e^{\beta_i \left[ \rho_i (I_4 -1 )_+^2+(1-\rho_i)(I_1 -3)^2 \right]} -1 \right \} \label{eq2.6}
\end{equation}
where $i=1,2,3$ corresponds to intima, media and adventitia; $\mu_i$, $ \eta_i$ are stress-like parameters; and $\beta_i$, $\rho_i$ are dimensionless. Due to the high content of water in each layer we consider them as nearly incompressible materials and therefore we take the Poisson ratio $\nu$ to be close to $0.5$ in all the layers. Also
\begin{align}
I_1 &= \textrm{Tr}({\bf C}_e) = \textrm{Tr}({\bf F}_e^T {\bf F}_e) \label{eq2.7}\\
I_4 &= {\bf b}(R,Z)^T{\bf C}_e {\bf b}(R,Z) \label{eq2.8}
\end{align}
where ${\bf C}_e = {\bf F}_e^T {\bf F}_e$ is the right Cauchy-Green tensor and $I_1$ is its first invariant. To incorporate the direction for which the collagen fibers are aligned in each of the layers we use ${\bf b}(R,Z)$ which is a unit vector. The role of the collagen fibers is included in $I_4$ which will be triggered only if $I_4 >1 $ because the collagen fibers only contribute to the energy when they are stretched not compressed:
\[(I_4 -1 )_+^2 = \begin{cases}
(I_4 -1 )^2 \qquad &\text{if} \ I_4 >1\\
0 &\text{if} \ I_4 \leq 1
\end{cases}\]

We are interested in finding a solution to the following boundary value problem
\begin{eqnarray}
&&\nabla \cdot \boldsymbol{\sigma} = \hat{\bf f}, \hspace{0.4in} \text{on $\omega$} \label{eq2.9}\\
&&\boldsymbol{\sigma}{\bf n} = -p {\bf n}, \qquad \text{on $\partial \omega^{(1)}_1$} \label{eq2.10}\\
&&\boldsymbol{\sigma}{\bf n}= 0,  \hspace{0.56in} \text{on $\partial \omega^{(2)}_3$} \label{eq2.11} \\
&&\boldsymbol{\sigma}{\bf n}|_{\partial \omega^{(2)}_1}+ \boldsymbol{\sigma}{\bf n}|_{\partial \omega^{(1)}_2}=0 \label{eq2.12}\\
&&\boldsymbol{\sigma}{\bf n}|_{\partial \omega^{(2)}_2} + \boldsymbol{\sigma}{\bf n}|_{\partial \omega^{(1)}_3}=0 \label{eq2.13} 
\end{eqnarray}
Where the tensor ${\boldsymbol \sigma}$ is the Cauchy stress tensor, $\hat{\bf f}$ is the body force, $\omega=\bigcup\limits_{i=1}^3 \omega_i$ for $i=1,2,3$ is the three layered domain after deformation and $\partial \omega^{(1)}_i$ is the inner boundary and $\partial \omega^{(2)}_i$  is the outer boundary of the $i$-th layer after the deformation. We consider $p$ to be the only boundary load which in our case is the blood pressure. We denote the outward unit normal vector to the deformed boundary by ${\bf n}$. Also assuming that the deformed arterial segment in our problem has a finite length we add two traction free boundary conditions for the end surfaces 
\begin{eqnarray}
&&\boldsymbol{\sigma}{\bf n}= 0, \hspace{0.75in} \text{on $\partial \omega_L$} \label{eq2.13.L}\\
&&\boldsymbol{\sigma}{\bf n}= 0, \hspace{0.75in} \text{on $\partial \omega_R$} \label{eq2.13.R}
\end{eqnarray}

Even though, (\ref{eq2.9})-(\ref{eq2.13.R}) seem like a typical boundary value problem, due to the convenience of working with the reference domain, $\Omega$ we prefer to use a system that utilizes $\partial \Omega$ for its boundary condition rather than $\partial \omega$, see Figure \ref{fig1ab}(a). Therefore, by applying Nanson's pull back formula and using the first Piola-Kirchoff stress tensor, (\ref{eq2.9})-(\ref{eq2.13.R}) turn into
\begin{eqnarray}
&&\nabla \cdot {\bf T} = \bf{f}, \hspace{0.9in} \text{on $\Omega_1$,  $\Omega_2$, $\Omega_3$ } \label{eq2.14}\\
&&{\bf T N} = -pJ {{\bf F}^{-T}} {\bf N}, \qquad \text{on $\partial \Omega_1^{(1)}$} \label{eq2.15}\\
&&{\bf T N} = 0, \hspace{1.01in} \text{on $\partial \Omega_3^{(2)}$} \label{eq2.16}\\
&& {\bf T N}|_{\partial \Omega_1^{(2)}} + {\bf T N}|_{\partial \Omega_2^{(1)}}=0 \label{eq2.17}\\
&& {\bf T N}|_{\partial \Omega_2^{(2)}} + {\bf T N}|_{\partial \Omega_3^{(1)}}=0 \label{eq2.18}\\
&&{\bf T N} = 0, \hspace{0.75in} \text{on $\partial \Omega_L$} \label{eq2.18.L}\\
&&{\bf T N} = 0, \hspace{0.75in} \text{on $\partial \Omega_R$} \label{eq2.18.R}
\end{eqnarray}
Where the first Piola-Kirchoff stress is
\begin{equation}
{\bf T}= J_g \frac{\partial W}{\partial {\bf F}_e} {\bf G}^{-T} \label{eq2.19}
\end{equation}
(see \cite{Amar2005,Yin2019}). Also ${\bf f}(X,Y,Z) = J \hat{\bf f}$, (see \cite{Gurtin1981}) and ${\bf N}$ is the outward unit normal vector to the reference boundary, see Figure \ref{fig1ab}(a). For solving this problem we use a weak form that is equivalent to (\ref{eq2.14})-(\ref{eq2.18.R}). 

\begin{figure}[h]
	\begin{minipage}{\textwidth}
		\subfloat[]{\includegraphics[scale=0.7]{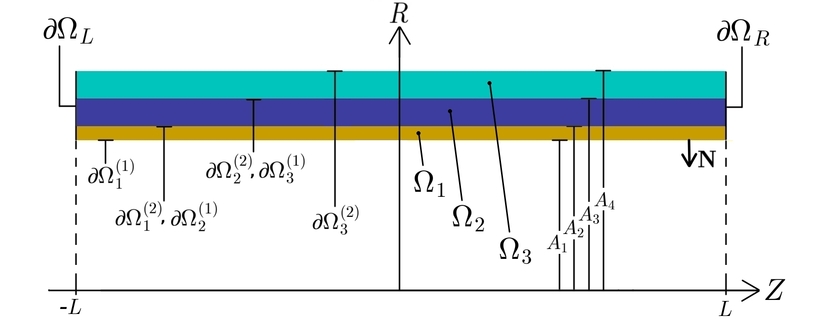}} \\
		\hspace*{1.29in}\subfloat[]{\includegraphics[scale=0.7]{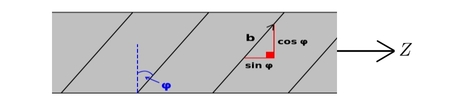}}
	\end{minipage}
	\caption{\footnotesize{(a) Mathematical domain with subdomain and boundary labels schematic. (b) Plan view of the artery and the orientation of a representative fiber. Vector ${\bf b}$ is defined for each layer by (\ref{eq2.1.23}) and the values of $\varphi$ for each layer is given in Table 1.}}\label{fig1ab}
\end{figure} 

\subsection{Weak formulation in cylindrical coordinates}

We consider an axisymmetric cylindrical domain $\Omega$ with three subdomains $\Omega_i$ for $i=1,2,3$ to represent the artery. Let $R$, $\Theta$ and $Z$ be the radius, polar angle and height of a point in the reference domain in cylindrical coordinates and $r$,$\theta$ and $z$ be that of the deformed domain. We consider $(R, \Theta , Z)$ as a generic point in the reference domain and $(r,\theta,z)$ as the one in the deformed domain. Suppose ${\bf u}$ is the displacement field that maps the reference domain into the deformed domain. Then ${\bf T}$ is related to ${\bf u}$ via the following definition for the deformation gradient
\begin{equation}
{\bf F} = {\bf I}+ \nabla {\bf u} \label {eq2.1.20}
\end{equation}
Then the deformation gradient (\ref{eq2.1.20}) in cylindrical coordinates will be given by
\begin{eqnarray}
{\bf F}=
\begin{bmatrix}
\pt{r}{R} & \frac{1}{R}\pt{r}{\Theta} & \pt{r}{Z} \\
r \pt{\theta}{R} & \frac{r}{R} \pt{\theta}{\Theta} & r \pt{\theta}{Z}\\
\pt{z}{R} & \pt{z}{\Theta} & \pt{z}{Z}
\end{bmatrix} \label{eq2.1.21}
\end{eqnarray}
However, in the axisymmetric case $r$ and $z$ are independent of $\Theta$ and $\theta$ is independent of $R$ and $Z$, which simplifies the deformation gradient into
\begin{eqnarray}
{\bf F}=
\begin{bmatrix}
\pt{r}{R} & 0 & \pt{r}{Z} \\
0 & \frac{r}{R}  &  0\\
\pt{z}{R} & 0 & \pt{z}{Z}
\end{bmatrix} \label{eq2.1.22}
\end{eqnarray}
hence
\begin{eqnarray}
J = {\rm det}({\rm F}) =\frac{r}{R} \left(\pt{r}{R} \pt{z}{Z}-\pt{r}{Z}\pt{z}{R}\right) \label{eq2.1.22.J}
\end{eqnarray}
The fiber direction vectors in each layer take the form
\begin{equation}
{\bf b}_i(R,Z) = \cos(\varphi_i) \hat{\bf e}_{\Theta} + \sin(\varphi_i) \hat{\bf e}_Z \label{eq2.1.23}
\end{equation}
where $i=1,2,3$ corresponds to intima, media and adventitia and $\varphi_i$ is the angle from Figure \ref{fig1ab}(b) for each layer. Also $\hat{\bf e}_{\Theta}$ and $\hat{\bf e}_Z$ are the circumferential and axial basis vectors.

Furthermore, ${\bf T}_i= J_{g_i} \pt{W_i}{{\bf F}_{e_i}} {\bf G}_i^{-T}$ for $i=1,2,3$. Using (\ref{eq2.5}), (\ref{eq2.6}), (\ref{eq2.7}), (\ref{eq2.8}), (\ref{eq2.1.22.J}) and (\ref{eq2.1.23}) we have:
\begin{eqnarray}
\pt{W_i}{{\bf F}_{e_i}} &=& \mu_i {\bf F}_e + \frac{2 \mu_i \nu (J_e-1)J_e}{1-2\nu}{\bf F}_e^{-1}- \mu_i {\bf F}_e^{-1} \nonumber\\ &+& \left\{2 \eta_i \rho_i {\bf F}_e {\bf b}_i {\bf b}_i^T (I_4-1)_+ +4 \eta_i (1-\rho_i){\bf F}_e(I_1-3)\right\} e^{\beta_i \left[ \rho_i (I_4 -1 )_+^2+(1-\rho_i)(I_1 -3)^2 \right]}  \label{eq2.1.23.derivative}
\end{eqnarray}
As mentioned before the biology of our problem suggests that the growth occurs only inside the intima. Therefore, ${\bf G}_i = {\bf I}$ and $J_{g_i} =1 $ when $i=2,3$. On the other hand we consider ${\bf G}_1 = \textrm{diag}(g_{\alpha}(Z),g_{\beta}(Z),g_{\gamma}(Z))$ with

\begin{eqnarray}
g_{\alpha}(Z) &=& 1+ \alpha t \exp(-aZ^2) \label{eq2.1.24}\\
g_{\beta}(Z) &=& 1+ \beta t \exp(-aZ^2) \label{eq2.1.25}\\
g_{\gamma}(Z) &=& 1+ \gamma t \exp(-aZ^2) \label{eq2.1.26}
\end{eqnarray}
corresponding to radial, circumferential and axial growth respectively. The variable $t$ is time which is in years throughout this paper. We include the exponential functions in $Z$ to model the effect of local growth in the axial direction. Furthermore, we want growth to increase linearly in time but at different rates and this is the reason for including  $\alpha$,$\beta$ and $\gamma$. In other words, these parameters $\alpha$,$\beta$ and $\gamma$ allow us to explore the effect of anisotropic growth on Glagov remodeling and in the case of isotropic growth we will have $\alpha=\beta=\gamma$. The parameter $a$ determines the locality of growth. We denote the radii of the boundaries between the lumen, intima, media, adventitia and the external tissue in the reference domain by $A_1, A_2, A_3$ and $A_4$ respectively. Also the value $L$ specifies the half-length of the artery cross section such that $-L<Z<L$. See Table 1.

We are now ready to propose a weak form for (\ref{eq2.14})-(\ref{eq2.18.R}).
\begin{theorem}
	Suppose a smooth pressure load $p$ is applied to the inner boundary $\partial \Omega_1^{(1)}$ of a three layered arterial domain $\Omega= \bigcup\limits_{i=1}^3 \Omega_i$ with piecewise smooth boundaries. For simplicity we  denote the outward unit normal vectors ${\bf N}|_{\partial \Omega_i^{(k)}}$, ${\bf N}|_{\partial \Omega_L}$ and ${\bf N}|_{\partial \Omega_R}$  by ${\bf N}_i^{(k)}$, ${\bf N}_L$ and ${\bf N}_R$  for $i=1,2,3$ and $k=1,2$, respectively. Assume that the domain has a finite length $2L$ and is traction free at both ends and  ${\bf f} \in \mathit{L}^2(\Omega)$ and ${\bf G}_i$ for $i=1,2,3$ are growth tensors defined on the intima, media and adventitia respectively. Then defining $J_{g_i} = {\rm det}({\bf G}_i)$ the displacement field ${\bf u} \in \mathit{C}^{2}(\Omega)$ that solves (\ref{eq2.14})-(\ref{eq2.18.R}) also satisfies
	\begin{align}
	2\pi \sum_{i=1}^{3}\int_{-L}^{L} \int_{A_i}^{A_{i+1}}\left[ \left(J_{g_i} \pt{W_i}{{\bf F}_{e_i}} {\bf G}_i^{-T} : \nabla {\bf v}\right) + {\bf f} \cdot{\bf v}  \right]\ R \ dR \ dZ \nonumber \\+ 2\pi \left. \left(\int_{-L}^{L} p J {\bf F}^{-T} {\bf N}_1^{(1)} \cdot {\bf v} \ R \ dZ \right) \right|_{R=A_1}=0  \label{eq2.1.27}
	\end{align}
	for every ${\bf v} \in \mathit{C}^{\infty}(\Omega)$. Where $\pt{W_i}{{\bf F}_{e_i}}$ is defined in (\ref{eq2.1.23.derivative}), $J$ is defined in (\ref{eq2.1.22.J}) and ${\bf F}$ is defined by (\ref{eq2.1.20}) and (\ref{eq2.1.22}).
\end{theorem}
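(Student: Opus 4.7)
The plan is to proceed via the standard variational derivation: test (\ref{eq2.14}) against a smooth vector field ${\bf v}$, apply the divergence theorem on each subdomain to shift a derivative from ${\bf T}$ onto ${\bf v}$, use the boundary and interface conditions (\ref{eq2.15})--(\ref{eq2.18.R}) to dispose of or identify every boundary integral that appears, and finally exploit axisymmetry to reduce the resulting three-dimensional integrals to integrals over the meridional rectangle $\{(R,Z):A_i\le R\le A_{i+1},\,-L\le Z\le L\}$.

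First I would dot (\ref{eq2.14}) with ${\bf v}\in C^{\infty}(\Omega)$, integrate over each $\Omega_i$, and sum over $i=1,2,3$. Using the product-rule identity $(\nabla\cdot{\bf T})\cdot{\bf v}=\nabla\cdot({\bf T}{\bf v})-{\bf T}:\nabla{\bf v}$ together with the divergence theorem on each $\Omega_i$ yields
\begin{equation*}
\sum_{i=1}^{3}\int_{\partial\Omega_i}({\bf T}{\bf N}_i)\cdot{\bf v}\,dS \;-\; \sum_{i=1}^{3}\int_{\Omega_i}{\bf T}:\nabla{\bf v}\,dV \;=\; \sum_{i=1}^{3}\int_{\Omega_i}{\bf f}\cdot{\bf v}\,dV,
\end{equation*}
where ${\bf N}_i$ denotes the outward unit normal on $\partial\Omega_i$. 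Each $\partial\Omega_i$ decomposes into the two lateral cylindrical surfaces plus the two end caps at $Z=\pm L$. The interface cylinders $R=A_2$ and $R=A_3$ appear twice in the sum, carrying opposite outward normals from the two adjacent layers, so (\ref{eq2.17}) and (\ref{eq2.18}) force the two pairs of boundary integrals to cancel pointwise. The outer surface $\partial\Omega_3^{(2)}$ contributes zero by (\ref{eq2.16}), and the four end-cap contributions vanish by (\ref{eq2.18.L})--(\ref{eq2.18.R}), leaving only the lumen surface $\partial\Omega_1^{(1)}$, which by (\ref{eq2.15}) yields $-\int_{\partial\Omega_1^{(1)}} pJ\,{\bf F}^{-T}{\bf N}_1^{(1)}\cdot{\bf v}\,dS$.

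Next I would substitute ${\bf T}_i=J_{g_i}\pt{W_i}{{\bf F}_{e_i}}{\bf G}_i^{-T}$ from (\ref{eq2.19}) into the remaining volume integrand and pass to cylindrical coordinates. Because the problem is axisymmetric and none of the integrands depends on $\Theta$ (by the ansatz underlying (\ref{eq2.1.22})), the $\Theta$-integration produces a factor of $2\pi$, with volume element $R\,dR\,d\Theta\,dZ$ on $\Omega_i$ and surface element $A_1\,d\Theta\,dZ$ on $\partial\Omega_1^{(1)}$. Moving every term to the left-hand side then produces precisely (\ref{eq2.1.27}).

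The main obstacle is the bookkeeping at the two interlayer interfaces: one must verify that ${\bf N}_1^{(2)}=-{\bf N}_2^{(1)}$ on $R=A_2$ and ${\bf N}_2^{(2)}=-{\bf N}_3^{(1)}$ on $R=A_3$, so that the conditions (\ref{eq2.17})--(\ref{eq2.18}) encode pointwise continuity of the traction vector ${\bf TN}$ across each interface rather than introducing a spurious sign. Once this orientation convention is pinned down and the sign of the pressure term is carried through correctly, the rest is a routine application of the divergence theorem combined with the change of variables to cylindrical coordinates, with axisymmetry making the angular integration trivial.
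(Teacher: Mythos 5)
Your proposal is correct and follows essentially the same route as the paper: test (\ref{eq2.14}) with ${\bf v}$, integrate by parts on each layer, cancel or substitute the boundary terms using (\ref{eq2.15})--(\ref{eq2.18.R}), and reduce to the meridional integrals via axisymmetry with the $2\pi$ factor from the $\Theta$-integration. The only cosmetic difference is that the paper integrates each boundary condition separately as equations (I)--(VII) and sums them before invoking the divergence theorem, whereas you apply the divergence theorem first and then dispose of the boundary integrals; the content is identical.
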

\begin{proof}
	
	Let ${\bf v} \in \mathit{C}^{\infty}(\Omega)$ be arbitrary. By multiplying both sides of (\ref{eq2.14})-(\ref{eq2.18.R}) by $\bf v$ and integrating over their respective domains we get
	\begin{align}
	\int_{\Omega} (\nabla \cdot {\bf T}) \cdot {\bf v} {\bf dx} &= \int_{\Omega}\bf{f} \cdot {\bf v} {\bf dx}, \tag{I}\\
	\int_{\partial \Omega_1^{(1)}} {\bf T}{\bf N}_1^{(1)} \cdot {\bf v} ds &= - \int_{\partial \Omega_1^{(1)}} pJ {{\bf F}^{-T}} {\bf N}_1^{(1)} \cdot {\bf v} ds, \tag{II}\\
	\int_{\partial \Omega_3^{(2)}} {\bf T N}_3^{(2)} \cdot {\bf v} ds &=0, \tag{III}\\
	\int_{\partial \Omega_1^{(2)}} {\bf T N}_1^{(2)} \cdot {\bf v} ds &+ \int_{\partial \Omega_2^{(1)}} {\bf T N}_2^{(1)} \cdot {\bf v} ds=0, \tag{IV}\\
	\int_{\partial \Omega_2^{(2)}} {\bf T N}_2^{(2)} \cdot {\bf v} ds &+ \int_{\partial \Omega_3^{(1)}} {\bf T N}_3^{(1)} \cdot {\bf v} ds=0, \tag{V}\\
	\int_{\partial \Omega_L} {\bf T N}_L \cdot {\bf v} ds &= 0, \tag{VI}\\
	\int_{\partial \Omega_R} {\bf T N}_R \cdot {\bf v} ds &= 0. \tag{VII}
	\end{align} 
	Adding equations (I)-(VII) and using $\Omega = \bigcup \limits_{i=1}^3 \Omega_i$  gives us
	
	\begin{align}
	&- \sum_{i=1}^3 \left[ \int_{\Omega_i} (\nabla \cdot {\bf T}_i)\cdot{\bf v} {\bf dx} \right]+ \int_{\Omega} {\bf f} \cdot {\bf v} {\bf dx} +  \int_{\partial \Omega_1^{(1)}} p J {\bf F}^{-T} {\bf N}_1^{(1)} \cdot {\bf v} {\bf dx}+ \int_{\partial \Omega_1^{(1)}} {\bf T N}_1^{(1)} \cdot {\bf v} ds \nonumber\\ 
	&+ \int_{\partial \Omega_2^{(1)}} {\bf T N}_2^{(1)} \cdot {\bf v} ds+ \int_{\partial \Omega_3^{(1)}} {\bf T N}_3^{(1)} \cdot {\bf v} ds + \int_{\partial \Omega_1^{(2)}} {\bf T N}_1^{(2)} \cdot {\bf v} ds +  \int_{\partial \Omega_2^{(2)}} {\bf T N}_2^{(2)} \cdot {\bf v} ds \nonumber \\
	&+ \int_{\partial \Omega_3^{(2)}} {\bf T N}_3^{(2)} \cdot {\bf v} ds + \int_{\partial \Omega_L} {\bf T N}_L \cdot {\bf v} ds + \int_{\partial \Omega_R} {\bf T N}_R \cdot {\bf v} ds = 0 \label{eq2.1.27.th1}
	\end{align}
	Now using the divergence theorem on the sum results in
	
	\begin{align}
	- \sum_{i=1}^3 \left[ \int_{\Omega_i} (\nabla \cdot {\bf T}_i)\cdot{\bf v} {\bf dx} \right]& =  \sum_{i=1}^3 \left[ \int_{\Omega_i}({\bf T}_i: \nabla {\bf v}) {\bf dx} \right] -\int_{\partial \Omega_1^{(1)}} {\bf T N}_1^{(1)} \cdot {\bf v} ds - \int_{\partial \Omega_2^{(1)}} {\bf T N}_2^{(1)} \cdot {\bf v} ds \nonumber\\
	& - \int_{\partial \Omega_3^{(1)}} {\bf T N}_3^{(1)} \cdot {\bf v} ds - \int_{\partial \Omega_1^{(2)}} {\bf T N}_1^{(2)} \cdot {\bf v} ds -  \int_{\partial \Omega_2^{(2)}} {\bf T N}_2^{(2)} \cdot {\bf v} ds \nonumber \\
	& - \int_{\partial \Omega_3^{(2)}} {\bf T N}_3^{(2)} \cdot {\bf v} ds - \int_{\partial \Omega_L} {\bf T N}_L \cdot {\bf v} ds - \int_{\partial \Omega_R} {\bf T N}_R \cdot {\bf v} ds \label{eq2.1.27.th2}
	\end{align}
	By replacing (\ref{eq2.1.27.th2}) in (\ref{eq2.1.27.th1}) we get
	\begin{align}
	\sum_{i=1}^3 \left[ \int_{\Omega_i}({\bf T}_i: \nabla {\bf v}) {\bf dx} \right] +  \int_{\Omega} {\bf f} \cdot {\bf v}  {\bf dx} +  \int_{\partial \Omega_1^{(1)}} p J {\bf F}^{-T} {\bf N} \cdot {\bf v} ds  = 0 \label{eq2.1.27.th3}
	\end{align}
	Switching to cylindrical coordinates we get
	\begin{align*}
	2\pi \sum_{i=1}^{3} \left[\int_{-L}^{L} \int_{A_i}^{A_{i+1}}\left[ ( {\bf T}_i: \nabla{\bf v})  + {\bf f} \cdot {\bf v}  \right]\ R \ dR  \ dZ  \right]+2\pi \left.\left(\int_{-L}^{L} p J {\bf F}^{-T} {\bf N} \cdot {\bf v} \ R \ dZ \right)\right|_{R=A_1} = 0 
	\end{align*}
	Notice that since there is no dependence on $\Theta$ due to axisymmetry we have integrated with respect to $\Theta$ producing the $2 \pi$ coefficients. Using the definition  ${\bf T}_i = J_{g_i} \pt{W_i}{{\bf F}_{e_i}} {\bf G}_i^{-T}$ we get (\ref{eq2.1.27}) for every  ${\bf v} \in \mathit{C}^{\infty}(\Omega)$. 
\end{proof}

{\bf Note}: In this paper we assume that the body forces are negligible. Therefore, (\ref{eq2.1.27}) turns into
\begin{align}
2\pi \sum_{i=1}^{3}\int_{-L}^{L} \int_{A_i}^{A_{i+1}} \left(J_{g_i} \pt{W_i}{{\bf F}_{e_i}} {\bf G}_i^{-T} : \nabla {\bf v}\right)  \ R \ dR \ dZ 
+ 2\pi \left.\left(p \int_{-L}^{L}  J {\bf F}^{-T} {\bf N} \cdot {\bf v} \ R \ dZ \right)\right|_{R=A_1}=0 \label{eq2.1.28}
\end{align}
for every ${\bf v} \in \mathit{C}^{\infty}(\Omega)$. We use the following table for parameter values.
\begin{center}
	\begin{tabular}{||c|c|c||} 
		\hline
		Symbol & Units & Value\\ [0.5ex] 
		\hline\hline
		$\mu_1$ &  kPa & $27.9$ \\ 
		\hline
		$\mu_2$ &  kPa & $1.27$\\
		\hline
		$\mu_3$ & kPa & $7.56$   \\
		\hline
		$\nu$ & Dimensionless & $0.49$ \\
		\hline
		$\eta_1$ & kPa & $263.66$\\
		\hline
		$\eta_2$ & kPa & $21.60$\\
		\hline
		$\eta_3$ & kPa & $38.57$\\
		\hline
		$\beta_1$ & Dimensionless & $170.88$\\
		\hline
		$\beta_2$ & Dimensionless & $8.21$\\
		\hline
		$\beta_3$ & Dimensionless & $85.03$\\
		\hline
		$\rho_1$ & Dimensionless & $0.51$\\
		\hline
		$\rho_2$ & Dimensionless & $0.25$\\
		\hline
		$\rho_3$ & Dimensionless & $0.55$\\
		\hline
		$\varphi_1$ & Degrees & $60.3$\\
		\hline
		$\varphi_2$  & Degrees & $20.61$\\
		\hline
		$\varphi_3$ & Degrees & $67$\\
		\hline
		$A_1$ & mm & $3$\\
		\hline
		$A_2$ & mm & $3.5$\\
		\hline
		$A_3$ & mm & $4.5$\\
		\hline
		$A_4$ & mm & $5.5$\\
		\hline
		$L$ & mm & $40$\\[1ex]
		\hline
	\end{tabular}
	\captionof{table}{\footnotesize{List of parameter values used in this paper. Mechanical parameters taken from \cite{Holzapfel2005}. The values of $A_k$, $k=1, \dots 4$ and $L$ are estimated.}}
\end{center}

Ultimately, we need to find a displacement field $\bf u$ that gives us a deformation gradient $\bf F$ in (\ref{eq2.1.20}) and (\ref{eq2.1.22}) which gives us the elastic tensors ${\bf F}_{e_i}$ for each layer by (\ref{eq2.1}) which leads to the first Piola-Kirchoff stress tensors ${\bf T}_i$ for each layer that gives (\ref{eq2.1.28}). 

We use FEniCS as our computing platform for solving this problem numerically. FEniCS is a powerful and open source package that can be utilized by languages such as C++ and Python (see \cite{Fenics2016}). For this problem we use a 2D mesh in $(R,Z)$ with about 11000 triangles. To avoid shear locking we use second order elements. This way we approximate the displacement field by second order Lagrangian elements which leads to a linear approximation for the strain. Also increasing the number of elements along the thickness of the domain is another common remedy for shear locking (see \cite{Zienkiewicz2005}). Although the problem is computationally intensive, the University of Delaware's Caviness cluster was able to find solutions in about 12 hours. Thanks to access to a high performance computing resource we were able to take advantage of both measures to simulate growth for large values of $t$ in (\ref{eq2.1.24})-(\ref{eq2.1.26}). 

\section{Results and Discussion}
For the rest of this paper we consider the blood pressure $p =12 \text{ kPa}= 90 \text{ mmHg}$ and we assume that the artery is in the pressurized state at $t=0$, see Figure \ref{fig2}. The blood pressure causes the radii $A_1$, $A_2$, $A_3$ and $A_4$ to increase and as a result the length of the artery decreases to conserve the volume. We use the notations $a_1$, $a_2$, $a_3$ and $a_4$ to refer to the radii in the deformed domain from now on. 

\begin{figure}[h]
	\centering
	\includegraphics[scale=0.84]{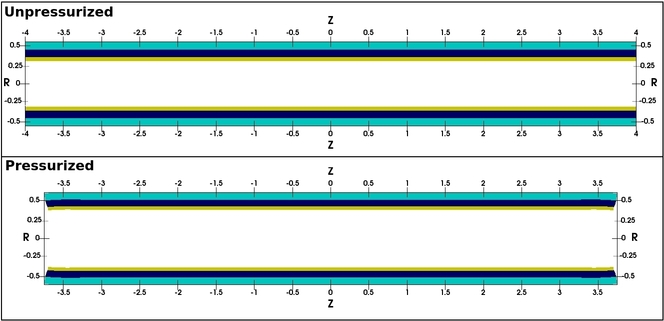}
	\caption{\footnotesize{\textit{Top}: The unpressurized reference domain. \textit{Bottom}: The reference domain after applying the blood pressure of $12$ kPa. } } \label{fig2}
\end{figure}

\subsection{The effect of pure growth in each direction}
First we start with investigating the effect of pure growth in each direction separately. We can roughly see in Figure \ref{fig3abc} the effect of such growth.  We believe that their different behaviors will give an insight on how each component of the growth tensor contributes to the overall process of remodeling. We provide graphs such as lumen area as a function of stenosis and lumen area as a function of time given the definition
\begin{equation}\label{eq3.1.29}
\text{Stenosis}(Z) = \frac{\text{Intima Area}(Z)}{\text{Intima Area}(Z)+\text{Lumen Area}(Z)}
\end{equation}
In addition, we explore the stress profiles in each direction as well as changes in the fiber angles and strain energy.

\begin{figure}[H]
	\begin{minipage}{\textwidth}
		\subfloat[]{\includegraphics[scale=0.6]{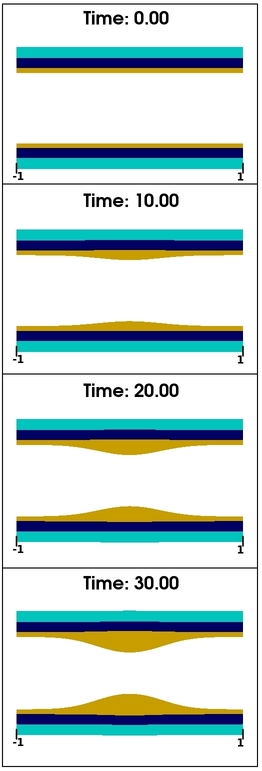}} 
		\hspace*{0.3in}\subfloat[]{\includegraphics[scale=0.6]{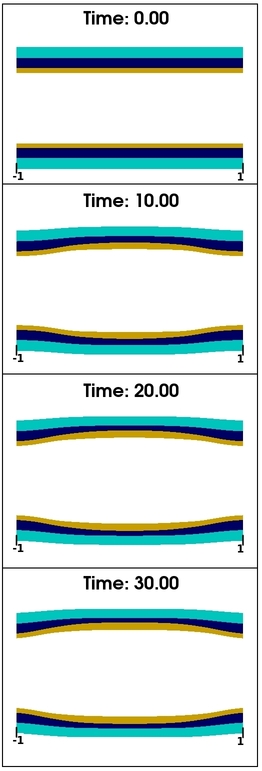}}
		\hspace*{0.3in}\subfloat[]{\includegraphics[scale=0.6]{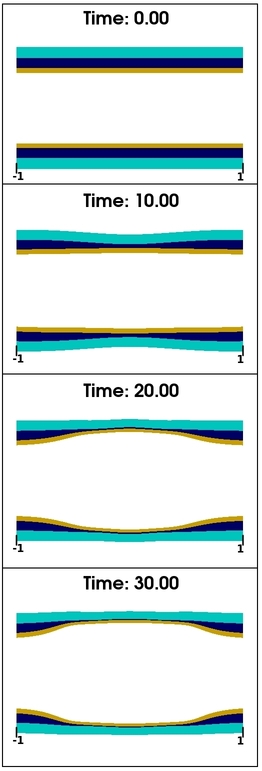}}
	\end{minipage}
	\caption{\footnotesize{Evolution of the domain for $-1\leq Z \leq1$ subject to (a) pure radial growth with $(\alpha,\beta,\gamma)=(1,0,0)$, (b) pure circumferential growth with $(\alpha,\beta,\gamma)=(0,1,0)$ and (c) pure axial growth with $(\alpha,\beta,\gamma)=(0,0,1)$. Parameter $a$ in (\ref{eq2.1.24})-(\ref{eq2.1.26}) is taken to be 6. } } \label{fig3abc}
\end{figure} 

\subsubsection{Pure Radial Growth} 

Let us assume that the intima grows according to the growth tensor ${\bf G}_{\alpha} = \textrm{diag}(g_{\alpha}(Z),1,1)$. This means that the intima grows radially by $g_{\alpha}(Z)$ from (\ref{eq2.1.24}) and there is no growth in the circumferential and radial direction. 

According to Figure \ref{fig3abc}(a) the radial growth almost exclusively contributes to inward thickening of the intima. As expected the inward remodeling is greater when closer to the center of growth $Z=0$ and consequently the artery undergoes more stenosis there, see Figure \ref{fig4}. We refrained from including more cross sections since far away from $Z=0$ the growth function has little to no effect and therefore the lumen area stays the same.

\begin{figure}[H]
	\centering
	\includegraphics[scale=0.5]{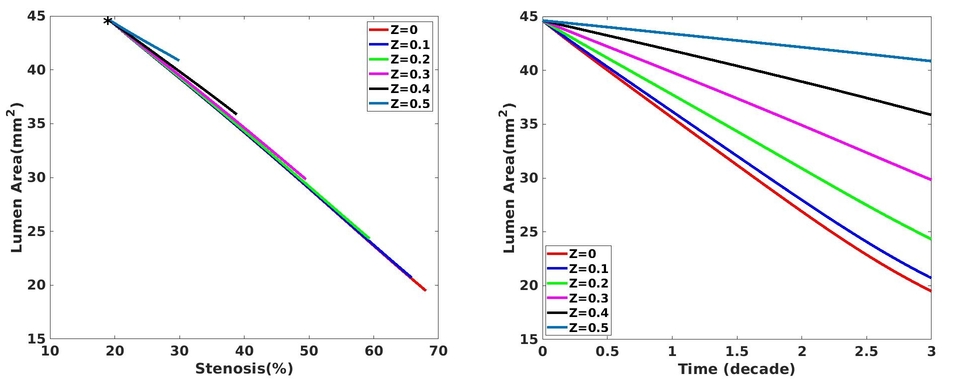}
	\caption{\footnotesize{\textit{Left}: Lumen area against stenosis. Star denotes the time $t=0$.  \textit{Right}: Lumen area in time. These graphs show that with pure radial growth close to the center of growth $Z=0$ the remodeling is strictly inward. }} \label{fig4}
\end{figure}\raggedbottom

Moreover, we can see that pure radial growth does not significantly change the maximum magnitude of stresses in the intima, see Figure \ref{fig5abc}(a). However, one can see a slight change in the maximum compressive stresses of the media and adventitia in Figures \ref{fig5abc}(b) and \ref{fig5abc}(c). We can conclude that when inward remodeling in the intima via radial growth reaches a certain level it affects the two other layers. In Figure \ref{fig3abc}(a) after $t=20$, media and adventitia experience a compressive force imposed by the intima. As a result they are slightly pushed back  which corresponds to the increases in $a_2$, $a_3$ and $a_4$ in Figure \ref{fig5abc}(e). On the other hand these changes in the radii are such that the media and adventitia thickness remain roughly the same. Also Figure \ref{fig5abc}(d) shows that the fiber angles only change in the intima by increasing. We can also see that pure radial growth does  does not greatly affect the underlying strain energy and thus stresses, see section 3.2.

\begin{figure}[H]
	\begin{minipage}{\textwidth}
		\subfloat[]{\includegraphics[scale=0.46]{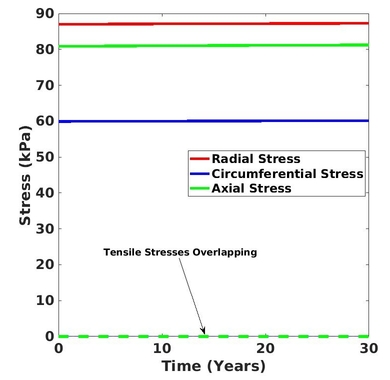}} 
		\subfloat[]{\includegraphics[scale=0.46]{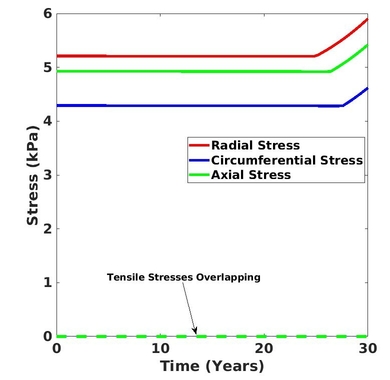}}
		\subfloat[]{\includegraphics[scale=0.46]{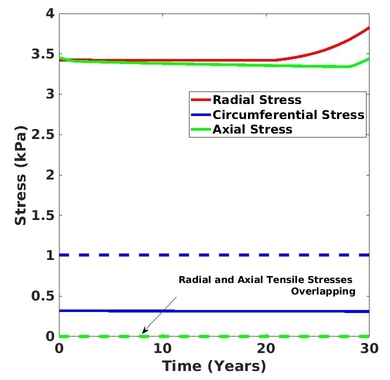}}\\
		\hspace*{0.85in}\subfloat[]{\includegraphics[scale=0.46]{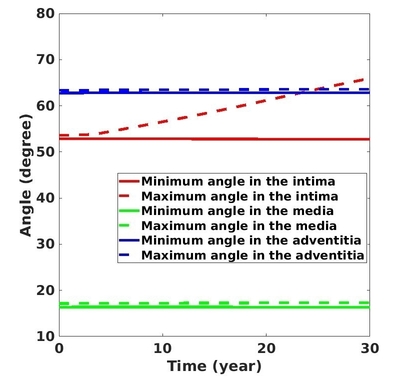}}
		\subfloat[]{\includegraphics[scale=0.46]{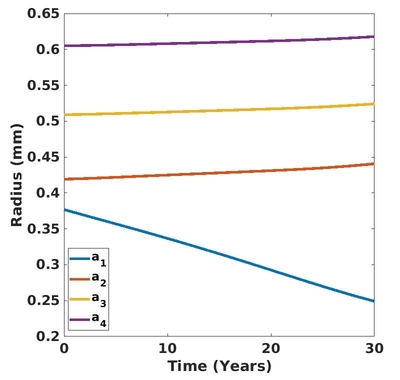}}
	\end{minipage}
	\caption{\footnotesize{The above graphs belong to the case of pure radial growth with $(\alpha,\beta,\gamma)=(1,0,0)$. {\it Top:} Changes in the maximum magnitude of compressive ({\it solid}) and tensile ({\it dashed}) stress components in the (a) intima (b) media and (c) adventitia. In (a) and (b) the stresses are mainly compressive since the tensile stresses are zero for all $t$. {\it Bottom:} (d) Changes in the minimum and maximum fiber angles. (e) Changes in the radii at $Z=0$.} } \label{fig5abc}
\end{figure}



\subsubsection{Pure Circumferential Growth}

Now we assume that growth is purely in the circumferential direction. Therefore the growth tensor takes the form ${\bf G}_{\beta} = \textrm{diag}(1,g_{\beta}(Z),1)$.

According to Figure \ref{fig3abc}(b), pure circumferential growth mostly contributes to the outward remodeling of the vessel. There is a slight intimal thickening and increase in stenosis but compared to the radial growth it is negligible, see Figure \ref{fig6}. Also one can see that the lumen area plateaus in Figure \ref{fig6} for large $t$ which might be due to the effect of stiffening collagen fibers. 

Unlike the radial growth, circumferential growth has a large compressive effect on each layer, see Figures \ref{fig7abcde}(a)-(c). We can see that the media thickness decreases significantly in Figure \ref{fig7abcde}(e). The abrupt increase in the magnitude of the maximum compressive stress components starts after $t=10$ in all three layers which is about the same time that media starts thinning. Also according to Figure \ref{fig7abcde}(d) the fiber angles in all three layers decrease. This is due to the outward remodeling imposed by the circumferential growth. Also this decrease in each layer happens quickly at the beginning and then becomes very slow which might be due to the fibers getting stiffer. As a result circumferential growth affects the stresses and the underlying strain energy more significantly than the radial growth, see section 3.2. 
\begin{figure}[H]
	\centering
	\includegraphics[scale=0.5]{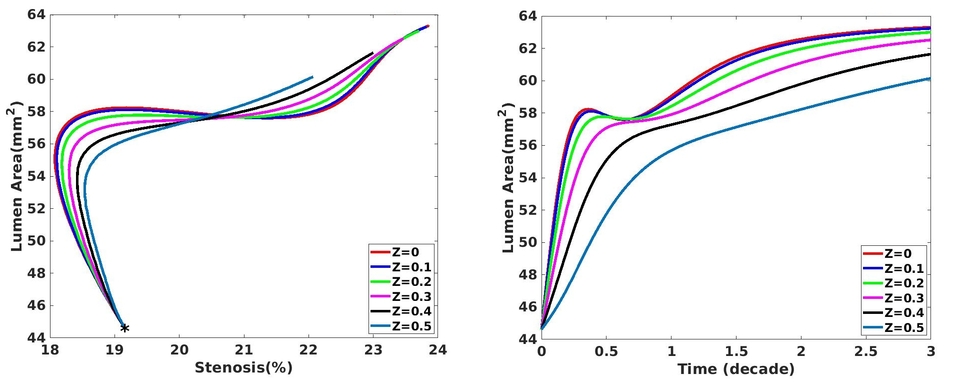}
	\caption{\footnotesize{\textit{Left}: Lumen area against stenosis. Star denotes the time $t=0$. \textit{Right}: Lumen area in time. These graphs show that with pure circumferential growth close to the center of growth $Z=0$ the remodeling is mostly outward.  } } \label{fig6}
\end{figure}\raggedbottom

\begin{figure}[H]
	\begin{minipage}{\textwidth}
		\subfloat[]{\includegraphics[scale=0.46]{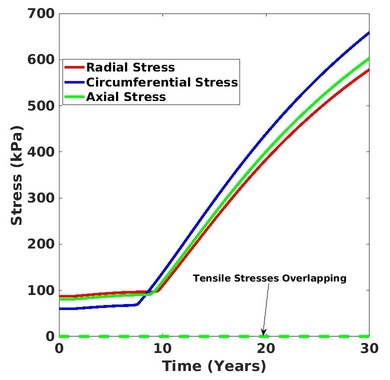}} 
		\subfloat[]{\includegraphics[scale=0.46]{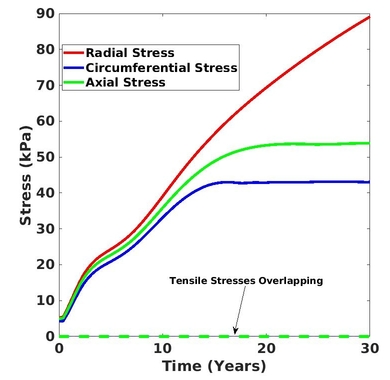}}
		\subfloat[]{\includegraphics[scale=0.46]{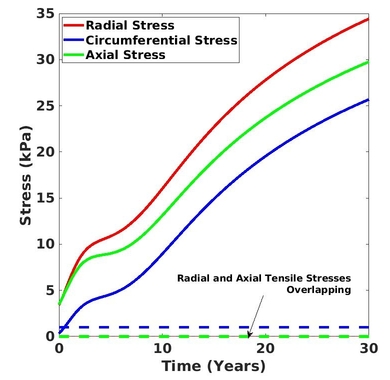}}\\
		\hspace*{0.9in}\subfloat[]{\includegraphics[scale=0.46]{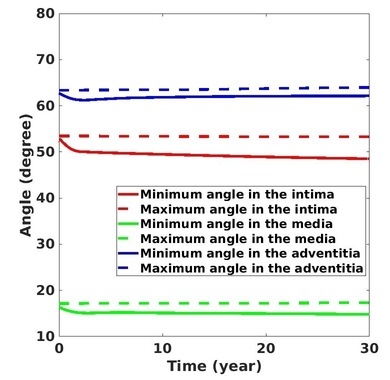}}
		\subfloat[]{\includegraphics[scale=0.46]{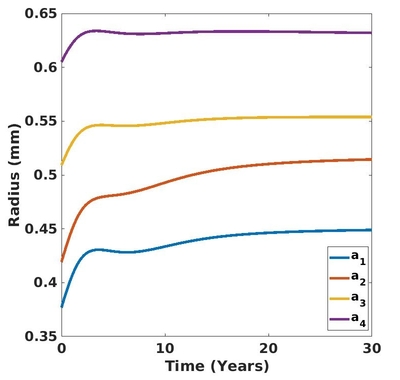}}
	\end{minipage}
	\caption{\footnotesize{The above graphs belong to the case of pure circumferential growth with $(\alpha,\beta,\gamma)=(0,1,0)$. {\it Top:} Changes in the maximum magnitude of compressive ({\it solid}) and tensile ({\it dashed}) stress components in the (a) intima (b) media and (c) adventitia. In (a) and (b) the stresses are mainly compressive since the tensile stresses are zero for all $t$. {\it Bottom:} (d) Changes in the minimum and maximum fiber angles. (e) Changes in the radii at $Z=0$.} } \label{fig7abcde}
\end{figure}


\subsubsection{Pure Axial Growth}

Finally we do the same investigation when growth is purely in the axial direction. Therefore we take the growth tensor to be ${\bf G}_{\gamma} = \textrm{diag}(1,1,g_{\gamma}(Z))$.

Pure axial growth mainly contributes to axial stretch, see Figure \ref{fig3abc}(c). The intimal thickening and stenosis are generally mild and as a matter of fact this mode of growth almost exclusively increases the lumen area and as a result {\it reduces} stenosis (Figure \ref{fig8}). We speculate that the cause for the lumen area plateauing in time is the stiffening effect of the collagen fibers. 

The axial growth has an interesting effect on the stress components. According to Figure \ref{fig9abcde} even without looking at the fiber angles or radii we can tell that something crucial happens after $t=10$. In the intima the effect is purely compressive (Figure \ref{fig9abcde}(a)) while in the media there is a large tensile stress in the circumferential and axial directions and a large compressive stress in the radial one (Figure \ref{fig9abcde}(b)). Axial growth also induces a compressive stress in the adventitia. Looking at Figure \ref{fig9abcde}(e) shows that there is a quick decrease in the thickness of all of three layers around $t=10$ and that is when the large increase in the stress components occurs. 

Looking at Figure \ref{fig1ab}(b), one can picture what happens if the vessel is stretched axially as a result of the axial growth. Since by definition fiber angles are measured with respect to $\hat{\bf e}_{\Theta}$ (see eq. (\ref{eq2.1.23})) stretching in the axial direction causes the fiber angles to become larger at first and then around $t=10$ they start to decrease, see Figure \ref{fig9abcde}(d). Given that the length of the growing region in the intima is nearly one fourth the length of the domain, elements far from $Z=0$ are basically unaffected by growth. The differential growth gives rise to an axial compression, thus decreasing the fiber angles. Again by looking at Figure \ref{fig9abcde} we can see that the axial growth causes a large increase in stresses in each layer after $t=10$. Perhaps the most interesting change happens inside of the media (\ref{fig9abcde}(b)). We can see a large increase in {\it tensile} circumferential and axial stresses which is because the vessel dilates quickly after $t=10$. Among all the other cases axial growth induces the biggest change in the neighboring layers. As one can see in Figure \ref{fig3abc}(c) the intima has to find a way to expand circumferentially and axially. Therefore it imposes tensile stresses on the media stretching it in those directions. On the other hand, media's resistance to these changes puts the intima under compressive stresses.   The axial growth also has the most significant effect on the underlying strain energy, see section 3.2.

\begin{figure}[H]
	\centering
	\includegraphics[scale=0.5]{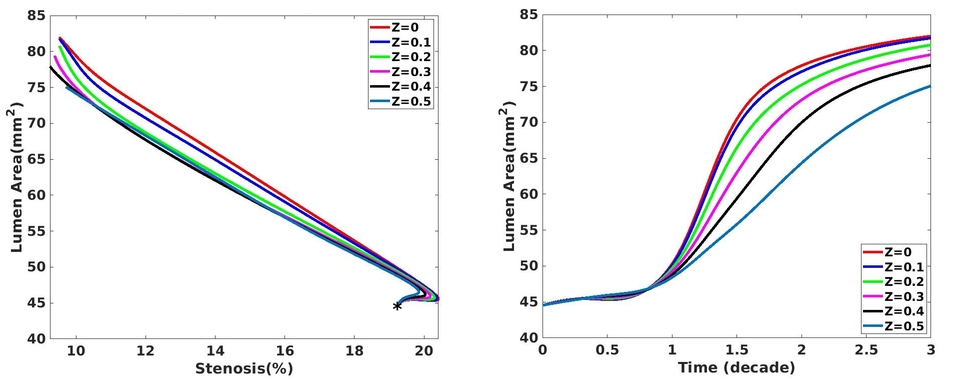}
	\caption{\footnotesize{\textit{Left}: Lumen area against stenosis. Star denotes the time $t=0$. \textit{Right}: Lumen area in time. These graphs show that with pure axial growth close to the center of growth $Z=0$ the remodeling is mostly outward with no significant increase in stenosis.} } \label{fig8}
\end{figure}

In conclusion, we saw that larger changes in fiber orientation occur when the growth is accompanied by outward remodeling. Also the angles increase as a result of axial stretch and decrease as a result of circumferential stretch. Among all three cases the radial growth induces the smallest change in the fiber orientation and stress components. We speculate that whenever the growth in the intima causes a reduction in the media and adventitia thickness the stresses are higher. This was mainly accompanied by outward remodeling. We saw that the axial growth had the most significant impact on the thickness of the layers and consequently caused a large change in the magnitude of stress components. Next we define a strain energy function and compare the energy change in each of the cases.

\begin{figure}[H]
	\begin{minipage}{\textwidth}
		\subfloat[]{\includegraphics[scale=0.46]{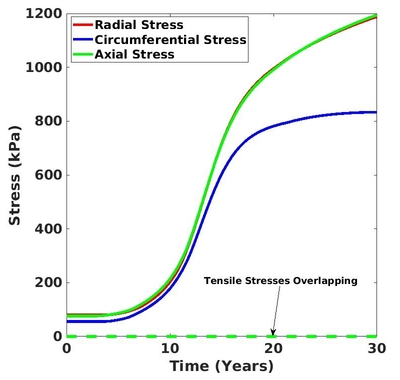}} 
		\subfloat[]{\includegraphics[scale=0.46]{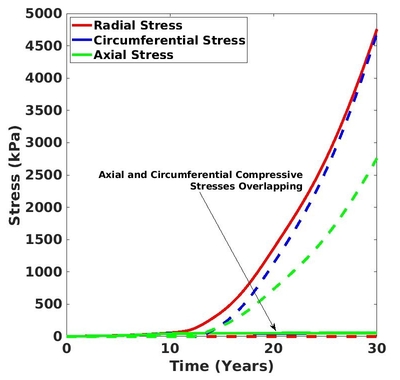}}
		\subfloat[]{\includegraphics[scale=0.46]{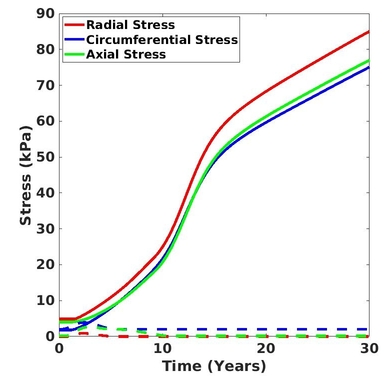}}\\
		\hspace*{0.9in}\subfloat[]{\includegraphics[scale=0.46]{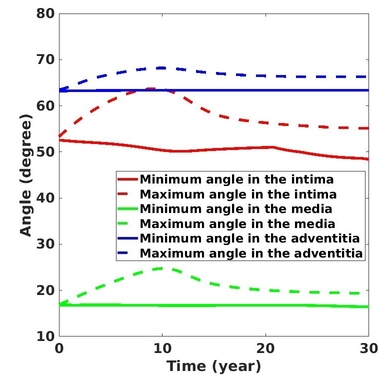}}
		\subfloat[]{\includegraphics[scale=0.46]{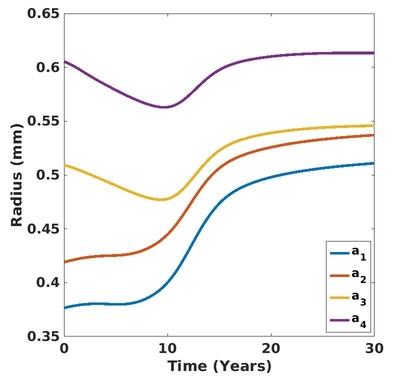}}
	\end{minipage}
	\caption{\footnotesize{The above graphs belong to the case of pure axial growth with $(\alpha,\beta,\gamma)=(0,0,1)$. {\it Top:} Changes in the maximum magnitude of compressive ({\it solid}) and tensile ({\it dashed}) stress components in the (a) intima (b) media and (c) adventitia. In (a) the stresses are mainly compressive since the tensile stresses are zero for all $t$. {\it Bottom:} (d) Changes in the minimum and maximum fiber angles. (e) Changes in the radii at $Z=0$.}  } \label{fig9abcde}
\end{figure}

%

\subsection{Growth and Energy Change}
In this section we are interested in calculating the energy stored in the artery due to each growth direction. We want to see which growth direction is more energetically favorable. 
The total energy consists of the bulk energy and the energy produced by the external forces such as the blood pressure. However, after enough time has passed the change in the bulk energy gets much larger than the one imposed by the blood pressure. Therefore for simplicity we only consider changes in the total bulk energy,
\begin{equation}
E = 2 \pi \sum_{i=1}^3 \int_{-L}^{L} \int_{A_i}^{A_{i+1}}J_{g_i} W_i \ R \ dR \ dZ. \label{eq3.1.30}
\end{equation}
Where $W_i$ is defined as $\Psi_{\rm vol}^i+\Psi_{\rm aniso}^i$ from (\ref{eq2.5}) and (\ref{eq2.6}). To show that (\ref{eq3.1.30}) is the right strain energy for our problem we prove:

\begin{theorem}
	The displacement field ${\bf u} \in \mathit{C}^2(\Omega)$ that satisfies the cylindrical weak equation 
	\begin{equation}
	2\pi \sum_{i=1}^{3}\int_{-L}^{L} \int_{A_i}^{A_{i+1}} \left(J_{g_i} \pt{W_i}{{\bf F}_{e_i}} {\bf G}_i^{-T} : \nabla {\bf v}\right)   \ R \ dR \ dZ = 0  \label{eq3.1.31}
	\end{equation}
	for every ${\bf v} \in \mathit{C}^{\infty}(\Omega)$ also makes (\ref{eq3.1.30}) stationary.
\end{theorem}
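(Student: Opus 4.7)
The plan is a direct first-variation calculation: perturb the displacement field ${\bf u}$, differentiate $E$ with respect to the perturbation parameter, and check that the result is precisely the left-hand side of (\ref{eq3.1.31}). Because the integration is taken over the fixed reference domain $\Omega$ in Lagrangian coordinates, the measure $R\,dR\,dZ$ and the limits $A_i$, $A_{i+1}$, $\pm L$ do not vary with ${\bf u}$; moreover, since the growth tensors ${\bf G}_i$ (and hence $J_{g_i}$) are prescribed as functions of position alone, only the strain-energy densities $W_i$ must actually be differentiated.

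Concretely, I would set ${\bf u}_\epsilon = {\bf u}+\epsilon{\bf v}$ for an arbitrary test field ${\bf v}\in C^{\infty}(\Omega)$ and use (\ref{eq2.1.20}) to get $\delta{\bf F}=\nabla{\bf v}$. From the multiplicative split ${\bf F}_{e_i}={\bf F}\,{\bf G}_i^{-1}$ one then reads off $\delta{\bf F}_{e_i}=(\nabla{\bf v})\,{\bf G}_i^{-1}$, so the chain rule gives
\begin{equation*}
\delta W_i \;=\; \pt{W_i}{{\bf F}_{e_i}} : \delta{\bf F}_{e_i} \;=\; \pt{W_i}{{\bf F}_{e_i}} : (\nabla{\bf v})\,{\bf G}_i^{-1}.
\end{equation*}
Applying the Frobenius-product identity ${\bf A}:{\bf B}{\bf C}={\bf A}{\bf C}^{T}:{\bf B}$ shifts the ${\bf G}_i^{-1}$ onto the first factor, producing $\bigl(\pt{W_i}{{\bf F}_{e_i}}\,{\bf G}_i^{-T}\bigr):\nabla{\bf v}$. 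Multiplying by $J_{g_i}$, summing over the three layers, and integrating with the $R\,dR\,dZ$ weight yields
\begin{equation*}
\left.\dv{}{\epsilon}\right|_{\epsilon=0}\!E({\bf u}_\epsilon) \;=\; 2\pi\sum_{i=1}^{3}\int_{-L}^{L}\!\!\int_{A_i}^{A_{i+1}}\!\left(J_{g_i}\pt{W_i}{{\bf F}_{e_i}}{\bf G}_i^{-T}:\nabla{\bf v}\right)R\,dR\,dZ,
\end{equation*}
which is exactly the left-hand side of (\ref{eq3.1.31}). Hence any ${\bf u}\in C^{2}(\Omega)$ that satisfies the weak equation automatically gives $\delta E=0$ for every admissible ${\bf v}$, which is what it means for $E$ to be stationary at ${\bf u}$.

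No integration by parts or boundary-term manipulation is required, and the pressure work from (\ref{eq2.1.28}) is absent both from (\ref{eq3.1.30}) and from (\ref{eq3.1.31}), so the calculation closes entirely on the bulk term. The only delicate point is a tensor-algebra convention: one must adopt the same index ordering for $\pt{W_i}{{\bf F}_{e_i}}$ as in the definition (\ref{eq2.19}) of the first Piola--Kirchhoff stress ${\bf T}_i$, so that the transpose falls on ${\bf G}_i^{-1}$ rather than on the energy derivative. Once that convention is fixed the argument reduces to essentially a one-line chain-rule computation, and the identification confirms that (\ref{eq3.1.30}) is precisely the strain-energy functional whose Euler--Lagrange equation coincides with the variational formulation established in Section~2.
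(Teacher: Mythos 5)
Your proposal is correct and follows essentially the same route as the paper: a direct Gateaux-derivative (first-variation) computation of $E$, using $\delta{\bf F}_{e_i}=(\nabla{\bf v})\,{\bf G}_i^{-1}$ from the multiplicative split and a Frobenius-product identity to move ${\bf G}_i^{-1}$ onto the energy derivative so that the integrand becomes $J_{g_i}\,\pt{W_i}{{\bf F}_{e_i}}\,{\bf G}_i^{-T}:\nabla{\bf v}$. The only difference is cosmetic: the paper inserts ${\bf G}_i^{T}$ to pass through ${\bf T}_i{\bf G}_i^{T}:{\bf F}'{\bf G}_i^{-1}$ before unwinding with two tensor identities, whereas you apply ${\bf A}:{\bf B}{\bf C}={\bf A}{\bf C}^{T}:{\bf B}$ in one step.
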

\begin{proof}
	We just need to show that for (\ref{eq3.1.30}), the Gateaux derivative of $E$ at ${\bf u}$ is zero. In other words we have to prove
	\begin{equation*}
	\frac{d}{d \epsilon}E[{\bf u}+ \epsilon {\bf v}]\left. \right|_{\epsilon=0} = 0
	\end{equation*}
	for all  ${\bf v} \in \mathit{C}^{\infty}(\Omega)$.
	
	Take  ${\bf v} \in \mathit{C}^{\infty}(\Omega)$. Then small perturbations in ${\bf u}$ give rise to small perturbations in ${\bf F}$ and consequently ${\bf F}_{e_i}$. 
	\begin{equation*}
	{\bf u} \rightarrow {\bf u}+ \epsilon {\bf v} \Longrightarrow {\bf F} \rightarrow {\bf F}+ \epsilon {\bf F'} \Longrightarrow {\bf F}_{e_i} \rightarrow {\bf F}+ \epsilon {\bf F}_{e_i}'
	\end{equation*}
	Where ${\bf F'} = \nabla {\bf v}$ and ${\bf F}_{e_i}' = {\bf F'} {\bf G}_i^{-1}$. Therefore
	\begin{align*}
	\frac{d}{d \epsilon}E[{\bf u}+ \epsilon {\bf v}]\left. \right|_{\epsilon=0} &= \frac{d}{d \epsilon} \left(2 \pi \sum_{i=1}^3 \int_{-L}^{L} \int_{A_i}^{A_{i+1}}J_{g_i} W_i({\bf F}+ \epsilon {\bf F}_{e_i}') \ R \ dR \ dZ \right)\\
	&= 2 \pi \sum_{i=1}^3 \int_{-L}^{L} \int_{A_i}^{A_{i+1}}J_{g_i} \frac{\partial W_i}{\partial {\bf F}_{e_i}} : {\bf F}_{e_i}' \ R \ dR \ dZ \\
	&=  2 \pi \sum_{i=1}^3 \int_{-L}^{L} \int_{A_i}^{A_{i+1}} \left(J_{g_i} \frac{\partial W_i}{\partial {\bf F}_{e_i}} {\bf G}_i^{-T} \right){\bf G}_i^{T} : {\bf F'}{\bf G}_i^{-1} \ R \ dR \ dZ \\
	& = 2 \pi \sum_{i=1}^3 \int_{-L}^{L} \int_{A_i}^{A_{i+1}} {\bf T}_i {\bf G}_i^{T} : {\bf F'}{\bf G}_i^{-1}  \ R \ dR \ dZ 
	\end{align*}
	Now using the tensor relations ${\bf A}^T : {\bf B}{\bf C} = {\bf C}{\bf A}: {\bf B}^T$ and ${\bf A}^T: {\bf B}^T = {\bf A}:{\bf B}$ we get
	\begin{align*}
	\frac{d}{d \epsilon}E[{\bf u}+ \epsilon {\bf v}]\left. \right|_{\epsilon=0} &= 2 \pi \sum_{i=1}^3 \int_{-L}^{L} \int_{A_i}^{A_{i+1}} {\bf G}_i^{-1} \left({\bf T}_i {\bf G}_i^{T}\right)^T : {\bf F'}  \ R \ dR \ dZ \\
	& = 2 \pi \sum_{i=1}^3 \int_{-L}^{L} \int_{A_i}^{A_{i+1}} {\bf G}_i^{-1} {\bf G}_i{\bf T}_i^T : {\bf F'}^T \ R \ dR \ dZ\\
	& = 2 \pi \sum_{i=1}^3 \int_{-L}^{L} \int_{A_i}^{A_{i+1}} {\bf T}_i : \nabla {\bf v} \ R \ dR \ dZ
	\end{align*}
	This is equal to zero by the assumption (\ref{eq3.1.31}). Thus, ${\bf u}$ is a stationary point for (\ref{eq3.1.30}).
\end{proof}
Now computing (\ref{eq3.1.30}) for the three cases we investigated in section 3.1.1, 3.1.2 and 3.1.3 we get Figure \ref{fig10}. We extract the energies induced by each growth direction at the times $t=5,10,15,20,25,30$ from Figure \ref{fig10} and we calculate their average values, see Table 2. Assuming that intima growth stems from cell division, the amount of energy needed for the cells to divide in the axial direction is on average 4 times bigger than that of the circumferential growth and on average 16 times bigger than that of the radial growth throughout the time (Table 2). In other words, growing in the axial direction is not energetically favorable while growing in the radial direction is more so. This observation motivates the need for an anisotropic treatment of the growth process.\\
\begin{center}
	\begin{tabular}{||c||c|c|c|c|c|c||c||} 
		\hline
		\backslashbox{Direction}{Time}&
		t=5 & t=10 & t=15 & t=20 & t=25 & t=30 & {Average} \\ [0.5ex] 
		\hline
		Radial & 3.251 & 3.255 & 3.257 & 3.262 & 3.288 & 3.356 & 3.277 \\
		\hline
		Circumferential & 3.435 & 4.270 & 7.411 & 13.059 & 22.097 & 32.807 & 13.846 \\
		\hline
		Axial & 3.851 & 8.427 & 32.156 & 63.211 & 91.828 & 117.195 & 52.778\\
		\hline
	\end{tabular}
	\captionof{table}{\footnotesize{Energy in $\mu J$ for 6 time values.} }
\end{center}
\begin{figure}[H]
	\centering
	\includegraphics[scale=0.54]{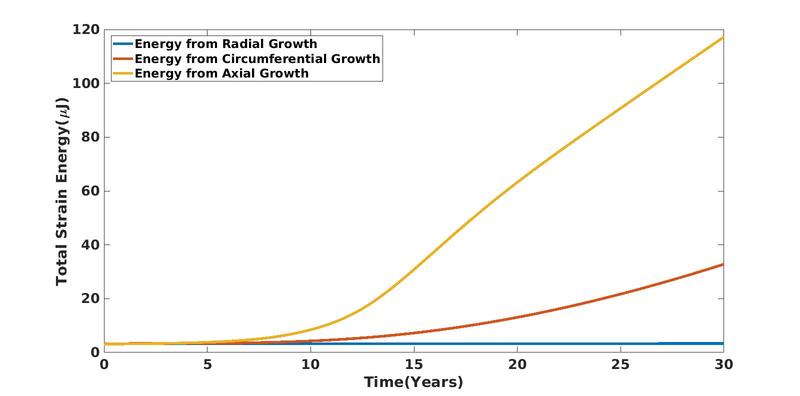}
	\caption{\footnotesize{The changes in total strain energy for growth in each direction.}} \label{fig10}
\end{figure}  

\subsubsection{Isotropic Growth vs General Anisotropic Growth}
Finally we can talk about general anisotropic growth. According to section 3.2 we choose $\alpha$, $\beta$ and $\gamma$ such that the energy change induced by the growth tensors ${\rm diag}(g_{\alpha},1,1)$, ${\rm diag}(1,g_{\beta},1)$ and ${\rm diag}(1,1,g_{\gamma})$ is roughly the same. As mentioned before the energy caused by the axial growth is on average 16 times and the energy caused by the circumferential growth is on average 4 times bigger than the energy caused by the radial growth for $0 \leq t \leq 30$. Therefore, we take $\alpha=1$, $\beta=\alpha/4=0.25$ and $\gamma=\alpha/16 \approx 0.06$ so that the energy change induced by growth in each direction is more balanced. We also consider an isotropic growth with $(\alpha,\beta,\gamma)=(0.25,0.25,0.25)$ to compare with the anisotropic growth. These parameter values are chosen so that both cases have approximately the same $J_g$ when $t \gg 1$. This means that the intima volume increase is roughly the same for both cases. We expect to see a more energetically favorable growth with the anisotropic case. 

\begin{figure}[H]
	\begin{minipage}{\textwidth}
		\hspace*{0.25in}\subfloat[]{\includegraphics[scale=0.66]{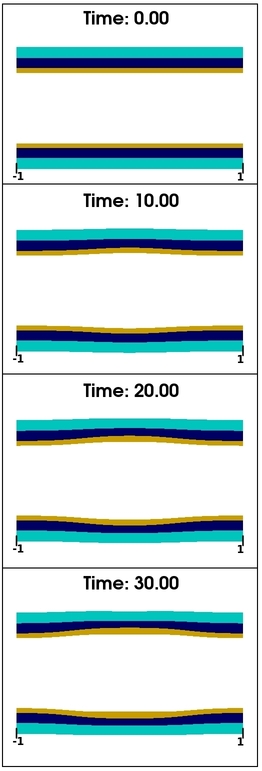}} \hspace*{0.95in}
		\subfloat[]{\includegraphics[scale=0.66]{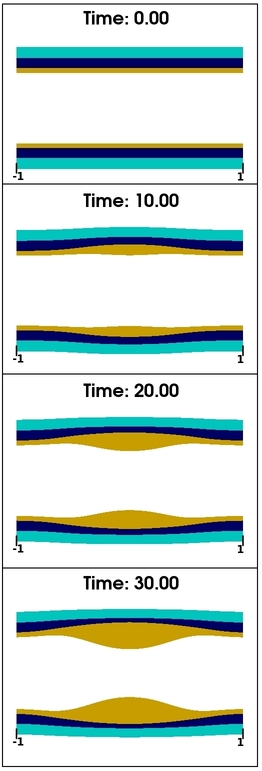}}
		
	\end{minipage}
	\caption{\footnotesize{Evolution of the domain for $-1 \leq Z \leq 1$ subject to (a) isotropic growth with $(\alpha,\beta,\gamma)=(0.25,0.25,0.25)$, (b) general anisotropic growth with $(\alpha,\beta,\gamma)=(1,0.25,0.06)$. Parameter $a$ in (\ref{eq2.1.24})-(\ref{eq2.1.26}) is taken to be 6. } } \label{fig11ab}
\end{figure} 
As usual we start by looking at the evolution of the domain in time, Figure \ref{fig11ab}. Due to the choice of small $\alpha$,$\beta$ and $\gamma$ for the isotropic case we do not observe a lot of changes in the domain. On the other hand, with approximately the same $J_g$ we can clearly observe an intimal thickening close to the center of growth accompanied by a significant encroachment for the anisotropic growth. Figure \ref{fig12} shows that the isotropic growth induces a sudden and unrealistic outward remodeling (with respect to stenosis) while the outward remodeling for the anisotropic growth (with respect to stenosis) is much gentler. In addition, the isotropic growth barely produces any inward remodeling during $0 \leq t \leq 30$ while the anisotropic growth results in $70\%$ stenosis at $Z=0$ due to the dominance of the radial growth direction, see Figure \ref{fig12}. This result is qualitatively closer to Glagov's original data which had a gentler compensation phase with respect to stenosis followed by an inward remodeling.

Looking at the stress profiles for the isotropic growth (Figure \ref{fig13abcde}(a)-(c)) we can see that the stresses are mainly compressive and very large in the intima, possibly the result of thinning of the intima, see Figure \ref{fig13abcde}(e). Although the stresses for the anisotropic growth are  also mostly compressive and most significant in the intima (see Figure \ref{fig14abcde}(a)-(c)) they are smaller and more physically reasonable (plaque caps are thought to rupture at about 300-545 kPa, see \cite{Kelly2013}). Notice that the maximum compressive stresses do not change for a while in Figure \ref{fig14abcde}(a) and then there is a sudden increase around $t=20$. We know from section 3.1.1 that the increase in the thickness and fiber angles is associated with an inward remodeling and does not significantly change the stresses and the strain energy. On the other hand, we saw in section 3.1.2 that the decrease in the thickness and fiber angles is associated with outward remodeling which changes the stresses and the strain energy more noticeably. Figures \ref{fig14abcde}(d) and (e) show that the anisotropic growth is a medley of these effects. The increase in the thickness and fiber angle in the intima seems more significant than the decreases because we considered the radial direction to be the dominant direction for growth. Also we do not see any changes in the fiber angles that resemble that of the axial growth due to suppression of the axial growth by the choice of small $\gamma$.  Therefore, the stresses in Figure \ref{fig14abcde}(a) (first following the radial growth trend) do not change for a while and then (following the circumferential growth trend after it kicks in) there is a sudden increase due to the circumferential growth effect. 

Figure \ref{fig13abcde}(d) shows the changes in the fiber angles for the isotropic growth. Because we have chosen a small value of $0.25$ for $\alpha$,$\beta$ and $\gamma$ the fiber angles do not change as significantly. However, we can see an increase in all three layers which is a characteristic of the axial growth. This much axial growth is enough to induce large stresses in the intima, see Figure \ref{fig13abcde}(a). Finally a comparison between the energies from the isotropic and anisotropic growth shows that the latter is much more energetically favorable, see Figure \ref{fig15}.

\begin{figure}[H]
	\centering
	\includegraphics[scale=0.48]{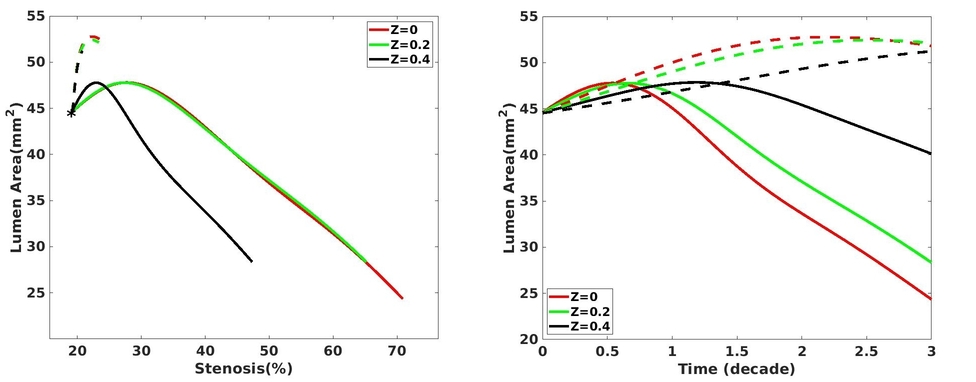}
	\caption{\footnotesize{\textit{Left}: Comparison between the Lumen area against stenosis for both isotropic and anisotropic growth. Star denotes the time $t=0$. \textit{Right}: Comparison between the lumen area in time for both isotropic and anisotropic growth. \textit{Dashed} lines correspond to isotropic and \textit{solid} lines correspond to anisotropic growth.} } \label{fig12}
\end{figure}\raggedbottom

\begin{figure}[H]
	\begin{minipage}{\textwidth}
		\subfloat[]{\includegraphics[scale=0.46]{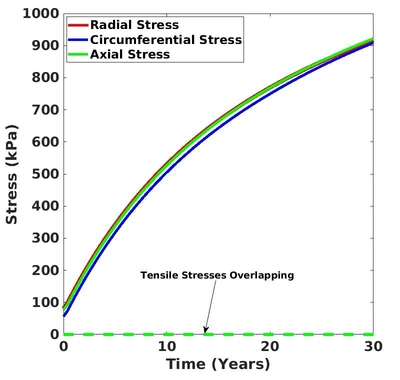}} 
		\subfloat[]{\includegraphics[scale=0.46]{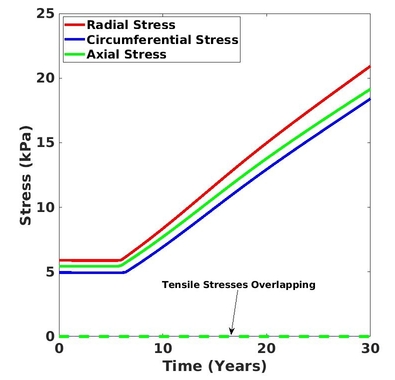}}
		\subfloat[]{\includegraphics[scale=0.46]{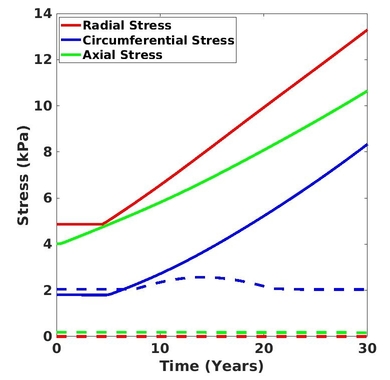}}\\
		\hspace*{0.9in}\subfloat[]{\includegraphics[scale=0.46]{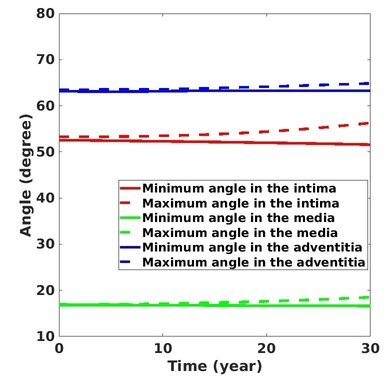}}
	    \subfloat[]{\includegraphics[scale=0.46]{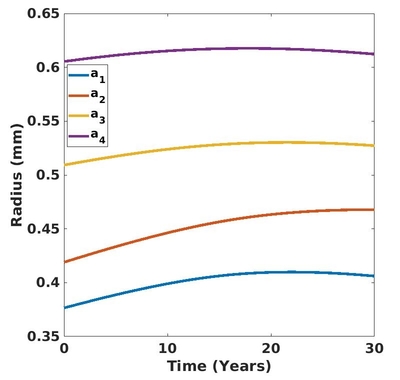}} 
	\end{minipage}
	\caption{\footnotesize{The above graphs belong to the case of isotropic growth with $(\alpha,\beta,\gamma)=(0.25,0.25,0.25)$. {\it Top:} Changes in the maximum magnitude of compressive ({\it solid}) and tensile ({\it dashed}) stress components in the (a) intima (b) media and (c) adventitia. In (a) and (b) the stresses are mainly compressive since the tensile stresses are zero for all $t$. {\it Bottom:} (d) Changes in the minimum and maximum fiber angles. (e) Changes in the radii at $Z=0$.}  } \label{fig13abcde}
\end{figure}

\begin{figure}[H]
	\begin{minipage}{\textwidth}
		\subfloat[]{\includegraphics[scale=0.46]{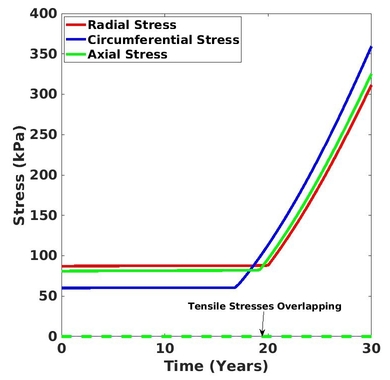}} 
		\subfloat[]{\includegraphics[scale=0.46]{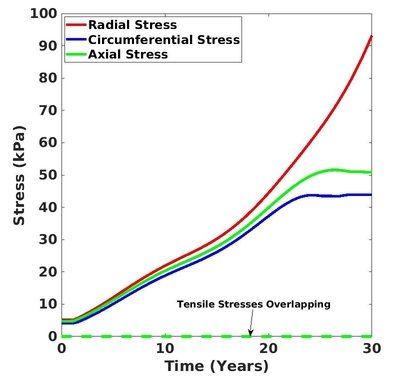}}
		\subfloat[]{\includegraphics[scale=0.46]{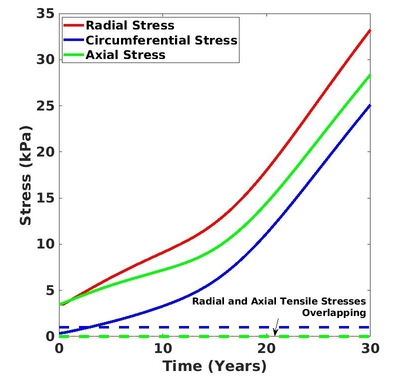}}\\
		\hspace*{0.9in}\subfloat[]{\includegraphics[scale=0.46]{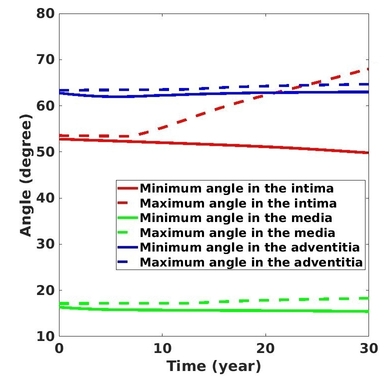}}
		\subfloat[]{\includegraphics[scale=0.46]{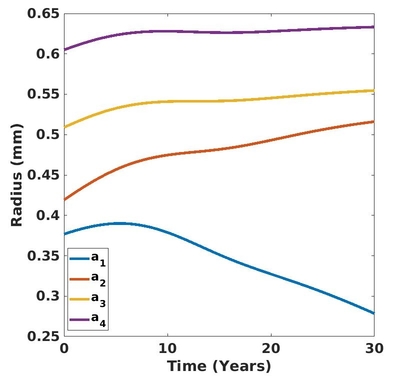}} 
	\end{minipage}
	\caption{\footnotesize{The above graphs belong to the case of anisotropic growth with $(\alpha,\beta,\gamma)=(1,0.25,0.06)$. {\it Top:} Changes in the maximum magnitude of compressive ({\it solid}) and tensile ({\it dashed}) stress components in the (a) intima (b) media and (c) adventitia. In (a) and (b) the stresses are mainly compressive since the tensile stresses are zero for all $t$. {\it Bottom:} (d) Changes in the minimum and maximum fiber angles. (e) Changes in the radii at $Z=0$.}  } \label{fig14abcde}
\end{figure}

\begin{figure}[H]
	\centering
	\includegraphics[scale=0.54]{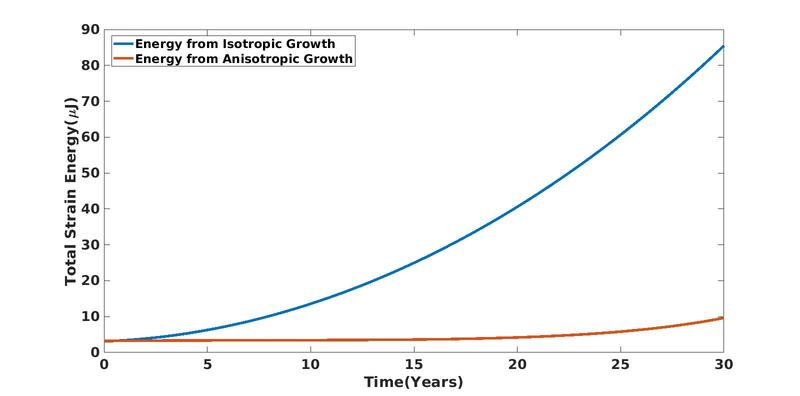}
	\caption{\footnotesize{Changes in the strain energy produced by isotropic and anisotropic growth in time.}} \label{fig15}
\end{figure}

\section{Conclusion}

In this paper we investigated an axisymmetric growing artery using morphoelasticity. First we were curious to see the effect of growth in each direction separately. This allowed us to associate each aspect of the Glagov remodeling phenomenon with growth in a certain direction. We observed that the radial growth is the main culprit in encroachment of the vessel while circumferential and axial growth are responsible for the outward remodeling. We also saw that radial growth thickens the intima while not changing the thickness of the other two layers. The circumferential growth thins the media without changing the thickness of the other two layers and the axial growth thins all the layers (media thins more significantly) and stretches the artery axially. Among all three growth directions the axial growth puts the artery under more stress than the other two cases while the radial growth has the least effect of such. Each growth has a different effect on the fiber angles. The radial growth causes the fiber angles in the intima to increase without changing the angles in the media and adventitia. The circumferential growth decreases the fiber angles in all three layers and the axial growth causes an increase and then a decrease in the fiber angles in all the layers. 

To enforce the need for an anisotropic treatment of our problem we first investigated the isotropic growth with the growth tensor ${\rm diag}(1+  0.25 t \exp(-aZ^2), 1+ 0.25 t \exp(-aZ^2)), 1+ 0.25 t \exp(-aZ^2))$. We saw mostly outward remodeling followed by a small amount of inward remodeling which was not realistic. Moreover, the magnitude of stresses were unphysical and the strain energy produced in this case was extremely large. Comparing the energy produced by each growth direction we deduced that the radial direction should be the dominant direction of growth since it is the most energetically favorable. Similarly, we decided to suppress the axial growth the most since it is the least energetically favorable. Therefore, we chose an anisotropic growth tensor of the form ${\rm diag}(1+  t \exp(-aZ^2),1+ 0.25 t \exp(-aZ^2)),1+ 0.06 t \exp(-aZ^2))$ which has approximately the same growth Jacobian $J_g$ as the isotropic case . On a plot of lumen area vs stenosis, we saw a much gentler outward remodeling followed by an encroachment starting later than that of the isotropic growth and a bigger stenosis. This was expected due to the dominance of growth in the radial direction. Also the stress profiles were more realistic and the strain energy produced was much less than the isotropic case.

Many studies have shown that smooth muscle cell (SMC) proliferation is the reason for intimal thickening (see \cite{Groves1995}, \cite{Sho2002}, \cite{Francis2003} and \cite{Nakagawa2018}). In addition, there are studies that suggest that SMC proliferation is regulated by changes in the circumferential stress in vessel wall (see \cite{Wayman2008}) or changes in the blood flow shear stress (see \cite{Ueba1997} and \cite{Haga2003}) . Although we think atherosclerosis proceeds because of growth in the intima, very few authors (maybe none) think about the nature of the anisotropy in the growth tensor. In other words, they do not determine which direction or directions this growth tends to progress. We suggested a way to choose an optimal anisotropic growth based on minimizing the strain energy. This choice was motivated purely by energy considerations and we do not know if there is any biological evidence to support this form of anisotropy. We believe if such biological reasons exist finding a way to control or change them might be the key to slow down or reverse the inward remodeling process which is responsible for affecting local hemodynamics and medical conditions such as angina.

\textbf{Acknowledgements}: NM was supported by a DE-CTR SHoRe pilot grant NIGMS IDeA U54-GM104941. PWF is supported by a Simons Foundation collaboration grant, award
number 282579.

 \vspace*{6pt}
\nocite{*}
\bibliographystyle{agsm}
\bibliography{bib4Glagov3D}
\end{document}